\documentclass[11pt, onecolumn]{IEEEtran} 
\usepackage{geometry}  
\usepackage{hyperref}
\hypersetup{backref, colorlinks=true, linkcolor=blue, citecolor=blue, urlcolor = blue}
\usepackage{graphicx,wrapfig,lipsum}	
\usepackage{yfonts}	

\usepackage{amssymb}
\usepackage{amsmath}
\usepackage{amsthm}
\usepackage{scalerel}
\usepackage{verbatim}


\makeatletter
\newtheorem*{rep@theorem}{\rep@title}
\newcommand{\newreptheorem}[2]{%
	\newenvironment{rep#1}[1]{%
		\def\rep@title{#2~\ref{##1}}%
		\begin{rep@theorem}}%
		{\end{rep@theorem}}}
\makeatother

\newtheorem{thm}{Theorem}[section]

\newreptheorem{thm}{Theorem}

\newtheorem{lem}[thm]{Lemma}

\newtheorem{prop}[thm]{Proposition}
\newreptheorem{prop}{Proposition}

\newreptheorem{cor}{Corollary}

\newreptheorem{conj}{Conjecture}
\newtheorem{rem}[thm]{Remark}

\theoremstyle{definition}
\newtheorem{defn}{Definition}

\DeclareGraphicsRule{.tif}{png}{.png}{`convert #1 `dirname #1`/`basename #1 .tif`.png}

\usepackage[dvipsnames]{xcolor}

\def\bigE{{\mathbb{E}}}

\markboth{}{}

\allowdisplaybreaks[1]

\title{Stabilizing a system with an unbounded random gain using only a finite number of bits}
\author{Victoria Kostina\IEEEauthorrefmark{1}, Yuval Peres\IEEEauthorrefmark{2}, Gireeja Ranade\IEEEauthorrefmark{2}, Mark Sellke\IEEEauthorrefmark{3}\\
	\IEEEauthorblockA{\IEEEauthorrefmark{1}California Institute of Technology, Pasadena, CA }\\
	\IEEEauthorblockA{\IEEEauthorrefmark{2} Microsoft Research, Redmond, WA}\\
	\IEEEauthorblockA{\IEEEauthorrefmark{3} University of Cambridge, Cambridge, UK}\\
	vkostina@caltech.edu, peres@microsoft.com, giranade@microsoft.com, msellke@gmail.com
}

\date{}							
\begin{document}
\maketitle

\section*{Abstract}
 We study the stabilization of an unpredictable linear control system where the controller must act based on a rate-limited observation of the state. More precisely, we consider the system $X_{n+1}=A_nX_n+W_n-U_n$, where the $A_n$'s are drawn independently at random at each time $n$ from a known distribution with unbounded support, and where the controller receives at most $R$ bits about the system state at each time from an encoder. We provide a time-varying achievable strategy to stabilize the system in a second-moment sense with fixed, finite~$R$. 

While our previous result provided a strategy to stabilize this system using a variable-rate code, this work provides an achievable strategy using a fixed-rate code. The strategy we employ to achieve this is time-varying and takes different actions depending on the value of the state. It proceeds in two modes: a normal mode (or zoom-in), where the realization of $A_n$ is typical, and an emergency mode (or zoom-out), where the realization of $A_n$ is exceptionally large.  
\section{Introduction}
System design for decentralized control over communication networks requires an understanding of the informational bottlenecks that affect our ability to stabilize the system. This paper focuses on a system that grows unpredictably and is observed over a rate-limited channel. We consider a modification of the classical data-rate theorems~\cite{wong1997systems,tatikonda,nair2000stabilization,nair2007feedback}, where the system growth is unpredictable and random, and provide a strategy to control such a system over finite and fixed-rate channel. This builds on previous work~\cite{kostina_rate-limited_2016}, which considered the control of a system with unpredictable growth with unbounded support over a channel using a variable-rate code.

Specifically, we consider the control of the following system:
\begin{equation} \label{eq:system}
X_{n+1} = A_{n} X_{n} + W_{n} - U_{n}.
\end{equation} 
Here $X_{n}$ is the (scalar) state of the system at time $n$. The system gains $\{A_{n}\}_{n \geq 0}$ are drawn i.i.d.\ from a known distribution and model the uncertainty the controller has about the system. The additive disturbances $\{W_{n}\}_{n \geq 0}$ are also i.i.d. drawn from a known distribution. The controller chooses the control $U_{n}$ causally based on $R$-bit observations $Y_{0}^{n} := \left\{ Y_{0}, \dots, Y_{n} \right\}$, where $Y_n \in \{1, 2, \ldots 2^R\}$, which are transmitted by an encoder co-located with the system. Our goal is to stabilize the system in a second-moment sense, which thus requires the co-design of an encoder-controller pair. 

\begin{figure}[tbp]
	\begin{center}
		\includegraphics[width=.4\textwidth]{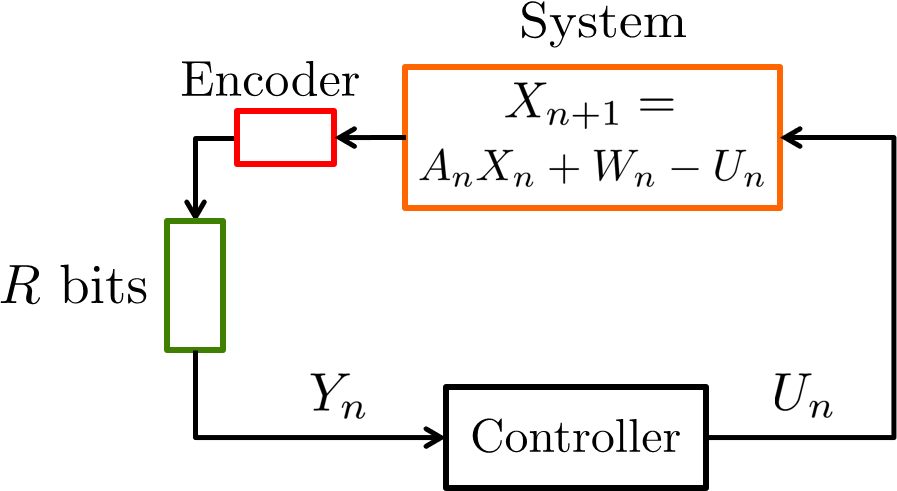}
		\caption{Rate-limited control: The control $U_{n}$ can be any function of the observations $Y_{0}^{n}$. The encoder chooses exactly $R$ bits to transmit to the controller at every time-step through $Y_n$, i.e. $Y_n \in \{1, 2, \ldots 2^R\}$. The system gains $\{A_{n}\}_{n \geq 0}$ are i.i.d.\ random variables, and so are the additive disturbances $\{W_{n}\}_{n \geq 0}$.}
		\label{default}
	\end{center}
\end{figure}

It is known that without a rate-limit on the channel, the system in~\eqref{eq:system} is second-moment stabilizable if and only if $\sigma_{A}^{2} < 1$, where $\sigma_{A}^{2}$ is the variance of $A_{n}$~\cite{uncertaintyThreshold}. Hence, we focus on the case \mbox{$\sigma_A^2 < 1$}. 
The goal is to guarantee second-moment stability of the system. Our main theorem is stated below.

\begin{thm}\label{thm:main}
	Assume that $A_{n}$ and $W_{n}$ have finite $(4 + \epsilon)$-th moments. Then, for some $R \in \mathbb N$, there exists a rate-$R$-limited control strategy (see Fig.~\ref{default}) that achieves second-moment stability, i.e.,
\begin{equation}
 \limsup_{n \to \infty} \bigE[ X_{n}^{2} ] < \infty. \label{2momentstability}
\end{equation}
\end{thm}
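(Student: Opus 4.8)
The plan is to design a time-varying encoder-controller pair that tracks the state $X_n$ through a dynamically adjusted quantization interval, and to analyze the resulting recursion via a carefully chosen Lyapunov function. Let me sketch the encoder first. At each time $n$ the encoder maintains an interval $[-L_n, L_n]$ (or a centered version thereof), and transmits which of $2^R$ equal subintervals contains $X_n$; the controller sets $U_n$ equal to the midpoint of the reported subinterval, so that after control the "residual" $X_n - U_n$ has magnitude at most $L_n / 2^R$. The key design question is how $L_n$ evolves. In the \emph{normal mode}, we would keep a running bound $L_n$ that contracts geometrically — since a typical $A_n$ satisfies $|A_n| \leq M$ for a fixed threshold $M$, and the residual after control is $L_n/2^R$, the next state obeys $|X_{n+1}| \leq M L_n / 2^R + |W_n|$, so choosing $2^R > M$ and inflating slightly to absorb $W_n$ gives a contractive update $L_{n+1} = \alpha L_n + (\text{const})$ with $\alpha < 1$ on the event that $A_n$ is typical. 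In the \emph{emergency mode}, triggered when $A_n$ exceeds the threshold $M$ (an event of small probability $p = \Pr[|A_n| > M]$), the encoder must "zoom out": it spends some time steps using all $2^R$ levels to localize the now-much-larger state, doubling or geometrically expanding $L_n$ until it again contains $X_n$, then resumes normal mode.

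The core of the argument is the second-moment analysis. I would track a Lyapunov-type quantity such as $V_n = \E[X_n^2 \mathbf{1}_{E_n}]$ split according to whether the system is currently in normal or emergency mode, or more cleanly work with $\E[L_n^2]$ together with the conditional law of $X_n$ given the mode. The recursion for $\E[L_n^{2}]$ will have the schematic form
\begin{equation}
\E[L_{n+1}^2] \leq (1-p)\,\alpha^2\, \E[L_n^2] + p \cdot \E\big[(\text{expansion factor driven by }A_n)^2 \cdot L_n^2\big] + \text{const}. \label{eq:recursion-sketch}
\end{equation}
The expansion factor in the emergency term is governed by how large $|A_n|$ is conditioned on exceeding $M$, and by how many steps the zoom-out takes — which is roughly $\log_{2^R}|A_n|$ steps, during which $L_n$ grows by further powers of typical $A$ values. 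The finite $(4+\epsilon)$-th moment hypothesis on $A_n$ is exactly what is needed to control the second moment of this compound expansion: a zoom-out lasting $k$ steps contributes a factor like $|A_n|^2$ times products of at most $k$ further gains, and summing $\E[|A_n|^2 (\log|A_n|)^{O(1)} \mathbf{1}_{|A_n|>M}]$-type terms against the geometric contraction requires a moment strictly above $2$ with enough room — the $4+\epsilon$ is the comfortable sufficient condition. The point is that by choosing $M$ large (hence $p$ small) and $R$ large (hence $\alpha$ small), the coefficient multiplying $\E[L_n^2]$ in \eqref{eq:recursion-sketch} can be made strictly less than $1$, so $\limsup_n \E[L_n^2] < \infty$, and since $\E[X_n^2] \leq \E[L_n^2]$ (the state is always inside the interval, by the zoom-out mechanism), we conclude \eqref{2momentstability}.

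I expect the main obstacle to be the bookkeeping of the emergency mode: one must guarantee that the zoom-out always \emph{recaptures} the state in finitely many steps (so that $X_n$ never escapes $[-L_n, L_n]$ permanently, which would break the bound $\E[X_n^2] \le \E[L_n^2]$), and simultaneously that the probability and the multiplicative cost of long emergency episodes decay fast enough to keep the effective contraction factor below $1$. A second subtlety is handling overlapping or back-to-back emergencies — a large $A_n$ can occur while still zooming out from a previous one — which I would handle by bounding the total multiplicative blow-up over any maximal emergency block in terms of the product of the large gains, and invoking the moment bound on that product. Getting the two free parameters $M$ and $R$ to cooperate — larger $R$ helps contraction but a fixed finite $R$ means zoom-out always takes at least $\Omega(\log|A_n|)$ steps — is the delicate quantitative balance at the heart of the proof, and is presumably where the "finite number of bits" in the title is earned.
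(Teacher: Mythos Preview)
Your high-level architecture --- two modes, zoom-in/zoom-out, Lyapunov recursion --- matches the paper, but there is a genuine gap in your contraction argument that stems from a misreading of the dynamics. The system is $X_{n+1}=A_nX_n+W_n-U_n$, with the control applied \emph{after} the random gain, and $U_n$ must be chosen before $A_n$ is revealed. Hence setting $U_n$ equal to the midpoint of the reported cell does \emph{not} yield $|X_{n+1}|\le M\,L_n/2^R+|W_n|$ on $\{|A_n|\le M\}$; what you get is $X_{n+1}=(A_n-\mu_A)X_n+\mu_A(X_n-U_n)+W_n$ (for the optimal choice $U_n=\mu_A\hat X_n$), and the term $(A_n-\mu_A)X_n$ is of order $|X_n|$, not of order $L_n/2^R$. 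In particular the normal-mode contraction factor cannot be driven below $\sigma_A^2$ no matter how large $R$ is; the paper's normal-mode lemma establishes exactly $\mathbb E[Q_{n+1}^2\mid\mathcal F_n]\le (1-c)Q_n^2+\text{const}$ for any $c<1-\sigma_A^2$, via the identity \eqref{eq:basic}, and this is where the standing hypothesis $\sigma_A^2<1$ is actually used. Your recursion \eqref{eq:recursion-sketch} with $\alpha$ made small by raising $R$ is therefore not available in this model.

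Two further points where the paper's proof diverges from your sketch. First, the emergency trigger cannot be ``$|A_n|>M$'' because neither encoder nor controller observes $A_n$; the paper triggers on $|X_{n+1}|>PM_n$, which is observable to the encoder. Second, your inequality $\mathbb E[X_n^2]\le \mathbb E[L_n^2]$ fails precisely during emergency mode (the state has escaped the interval), so it cannot be the bridge from the interval recursion to \eqref{2momentstability}. The paper handles this by (i) tracking two scales $M_n$ and $I_n$ combined into $Q_n=\sqrt{M_n^2+KI_n^2}$, (ii) \emph{front-loading} the emergency cost through the dominating sequence $N_n=Q_{\tau(n)}2^{\tau(n)-n}$, which by construction halves deterministically during emergency, and (iii) a freezing trick at the target time $n_0$ to guarantee the pointwise domination $|X_{n_0}|\le N_{n_0}$. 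The $(4+\varepsilon)$-moment hypothesis then enters exactly where you anticipate, to bound $\mathbb E[N_{n+1}^2\mathbf 1_{\tau(n+1)>n+1}\mid\mathcal F_n]\le\varepsilon N_n^2$ via Markov on the emergency length.
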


\subsection{Strategy overview}
Previous works (e.g.~\cite{martinsUncertain,okano2014arxiv, kostina_rate-limited_2016}) 
have investigated variable-rate coding strategies for stabilizing the second-moment of the system~\eqref{eq:system}, as well as other related systems. 
However, variable-rate strategies fundamentally require that the encoder send an unbounded number of bits in the case when the realization of $A_{n}$ is large, and cannot be directly adapted for use in a fixed-rate regime.


In contrast to the variable-rate strategies, we employ a time-varying strategy in this work. It works in two different ``modes,'' depending on whether the realization of $A_n$ was typical or exceptionally large. We will describe the strategy precisely later, but provide an intuitive description here. 

At every time-step, the controller maintains an interval that estimates system state. When the state remains in the interval that was estimated by the controller, the strategy proceeds in ``normal'' (zoom-in) mode.  If the state escapes the predicted interval, usually due to the realization of an exceptionally large $A_n$, the strategy enters ``emergency'' (zoom-out) mode. There, the controller grows it's guess for the interval at every time-step till we are able to locate $X_n$, following which we step back into normal mode. This time-varying strategy fundamentally works because we can guarantee some average decrease during normal mode, while we can bound the blow-ups from emergency mode (since we assume that the $4$th moments of $A$ and $W$ are bounded). 

\subsection{Related work}\label{sec:related}
Fundamentally, our problem setup grows out of the uncertainty threshold principle~\cite{uncertaintyThreshold} and the extensive work to understand data-rate theorems~\cite{wong1997systems,tatikonda,nair2000stabilization,nair2007feedback,victoriacontrol}. However, the work on the uncertainty threshold principle~\cite{uncertaintyThreshold} does not consider rate limits on the observations, and the classical data-rate theorems assume that the system model and growth rate are always perfectly known to the controller. 

Martins et al.~\cite{martinsUncertain} were the first to consider the  rate-limited control of a system with uncertain growth. Their achievable scheme is quantization-based and assumes that the uncertainty on the system growth is bounded above. 
Phat~et~al.~\cite{phat2004robust} also consider rate-limited control with uncertain parameters from a robust control perspective. Their setup differs from ours in that it is not stochastic, and only considers bounded support for the uncertainty on the parameters. They provide a uniform quantization scheme that can stabilize their system.

Okano and Ishii have made significant progress on understanding rate-limited control of systems with unpredictable growth from a worst-case perspective~\cite{okano2012data,okano2012datalossy,okano2014arxiv}. 
However, they also consider systems where the uncertain parameters have bounded support and do not consider additive noise in their model. The achievable scheme in~\cite{okano2014arxiv} proposes a non-uniform optimal quantizer for their problem that uses bins with logarithmically decreasing lengths, with the bins closest to zero being the largest. However, this 
cannot work in the setting where both $A_{n}$ and $X_{n}$ can have unbounded support, as is the case in our work.



We also build on a series of investigations around model unpredictability and multiplicative noise in control systems~\cite{controlcapacity,gireejaAllerton,tiger}. In this work, we take a stochastic control approach, which complements the investigations around parameter uncertainty in the robust control literature~\cite{fu2010quantized,zhou1998essentials}.

In the context of linear systems with known parameters and unbounded additive disturbances, the necessity of adaptive quantization for stabilization has been long recognized. Nair and Evans \cite{nairStabilization} proved that time-invariant fixed-rate quantizers are unable to attain bounded cost if the noise is unbounded \cite{nairStabilization}, regardless of their rate. The reason is that since the noise is unbounded, over time, a large magnitude noise realization will inevitably be encountered, and the dynamic range of the quantizer will be exceeded by a large margin, not permitting recovery. Adaptive quantizers of zooming type were originally proposed by Brockett and Liberzon \cite{brockett2000quantized}, and further studied in \cite{nairStabilization,yuksel2010fixedrate,kostina2018exact}.   Such quantizers ``zoom out'' (i.e. expand their quantization intervals; this corresponds to our ``emergency mode'') when the system is far from the target and  ``zoom in'' when the system is close to the target (this corresponds to our ``normal mode''). This paper extends our recent analysis \cite{kostina2018exact} of adaptive quantization for known linear systems to unpredictably varying ones.

\section{Setup}

We consider the system in~\eqref{eq:system}. 
Here $W_n, A_n$ are i.i.d. variable noises drawn from the laws $W$ and $A$. These are independent of the system state $X_{n}$ and the control term $U_n$. We use $n$ as the time index. Without loss of generality, we assume our initial condition $X_0=0$. The \emph{encoder} and \emph{controller} will work together to stabilize the sequence $(X_n)$ in the second-moment sense, i.e., \eqref{2momentstability}.
The encoder observes $X_n$ perfectly, and transmits a symbol from the set $\{1, 2, \ldots, 2^{R}\}$ to the controller at each time-step. $R$ is a finite universal constant that we will choose later. The controller can choose $U_n$ as a function of these $R$ bits and any previous history. Let $\mu_A,\sigma_A,\mu_W,\sigma_W$ be the means and standard deviations of $A,W$. Without loss of generality, set $\mu_W=0$.

The goal of second-moment stability cannot be accomplished for arbitrary distributions $A,W$. 
Using the independence of $A_{n}$, $W_{n}$ and $U_{n}$, we see from~\eqref{eq:system} that~\cite{uncertaintyThreshold}: 
\begin{equation}
\mathbb E[X_{n+1}^2]=\sigma_A^2\mathbb E[X_n^2]+\mathbb E[(\mu_AX_n-U_n)^2]+\sigma_W^2. \label{eq:basic}
\end{equation}
It is clear from \eqref{eq:basic} that the assumption $\sigma_A<1$ is required for second-moment stabilizability (except possibly in the case $\sigma_A=1,\sigma_W=0$ which we ignore.) Our result will provide a strategy to achieve it. We require the mild assumption that $A$ and $W$ have finite $4+\varepsilon$-moment for some $\varepsilon>0$.

\section{Description of the Strategy}

At the $n$th time-step, three things happen in the following order:

\begin{enumerate}

\item The encoder transmits a codeword from a codebook of size $R$ bits to the controller based on the observation of $X_n$. The exact value of $R$ is described below.

\item The controller chooses $U_n$ based on the transmission from the encoder and any past history.

\item The noises $A_n, W_n$ are generated and the new state $X_{n+1}$ is determined, based on $A_n, W_n$ and $U_n$. This value of $X_{n+1}$ is observed by the encoder.

\end{enumerate}

We proceed in ``rounds,'' which are blocks of time-steps. Most rounds will consist of a single time-step, in those cases where $X_{n}$ falls in an interval predicted by the encoder-controller pair. In those cases where $X_{n}$ falls outside those bounds, a round might consist of multiple time-steps, until an appropriate bounding interval is found for the state.


Throughout, we will maintain positive numbers $I_n\leq M_n,$ that will be defined below. Roughly speaking, $M_n$ will represent an estimate of the maximum value of $|X_n|$, and $I_n$ will capture the quantization error in the controller's estimate of $X_n$.

%

We fix $M_0$ (a lower bound for the $M_n$'s), a large constant $P$, and a small $\delta$ to be chosen later, such that $\frac{1}{\delta} \in \mathbb{Z}$. The constant $P$ captures the controller's guess on how the $M_n$'s will grow at each time-step. The constant $\delta$ specifies the size of the codebook to $\frac{2}{\delta} + 1$, and the transmission rate is $R = \lceil \log (\frac{2}{\delta} + 1) \rceil.$

We will now describe how the $M_n$'s will evolve with time along with our strategy. We first initialize $M_1=I_1=M_0$. 
At each time-step, we check how $X_n$ compares to the guess of the controller. If $|X_n| \leq P M_{n-1}$, we operate in normal mode. In this case, the encoder transmits a quantized version of the state to the controller. If $|X_n| > P M_{n-1}$ we operate in emergency mode, and the encoder simply sends a special codeword to indicate this emergency mode. A \emph{round} of time-steps will continue until we exit emergency mode at a time $\ell$ such that $X_\ell \leq PM_{\ell-1}$.\\

\noindent \textbf{Case 1:} $X_n \leq P M_{n-1}$.

In this case, $X_n$ has landed within the bounds we expected it to land in, and we are starting the round in \emph{normal mode}. The encoder partitions the interval $[-P M_{n-1}, P M_{n-1}]$ into $\frac{2}{\delta}$ subintervals of length $\delta P M_{n-1}$ and sends a codeword to represent the index of the interval $(a_{n},b_{n})$ containing $X_{n}$. Set
\begin{align}
M_{n}&=\max(M_0,|a_{n}|,|b_{n}|),\\
I_{n}&=\max\left(M_0,\frac{|a_{n}-b_{n}|}{2}\right),\label{eq:idef}\\
\rho_{n+1}&= \textrm{sgn}(a_{n}) = \textrm{sgn}(b_{n}), 
\label{eq:rhosign}
\end{align}
where equality in \eqref{eq:rhosign} holds because $\frac{1}{\delta}$ is an integer, and so the point $0$ is always an endpoint of a quantization interval.
The values $M_n,I_n,\rho_n$ are common knowledge between the encoder and controller; they are uniquely determined by the bits that have been sent by the encoder. Consequently, in normal mode, the controller knows which interval of the form 
\begin{equation}\label{eq:interval}
\rho_n[M_n-2I_n,M_n]
\end{equation}
contains $X_n$, where $\rho_n \in \{\pm 1\}$.

Now the controller will set 
\begin{equation}\label{eq:u}
U_n=\rho_n\mu_A(M_n-I_n).
\end{equation}
The random variables $A_n,W_n$ are generated, and $X_{n+1}$ is determined. 
Hence, we will have that:
\begin{equation} \label{eq:I}
|\mu_A X_n-U_n|\leq \mu_A I_n.
\end{equation}

If $|X_{n+1}|\leq PM_n$, the \emph{round} ends at time-step $n+1$. Else, it will proceed for at least one more time-step.\\

\noindent \textbf{Case 2:} $|X_{n}|>PM_{n-1}$.
    
If this is the case, the encoder simply uses the special ($\frac{2}{\delta} + 1$-th) codeword to indicate that we are in \emph{emergency mode} and the state has escaped the predicted interval. The \emph{round} will continue until we exit emergency mode.

Throughout emergency mode, we will take \[M_{n+k+1}=PM_{n+k}\] until exiting. More specifically, if at the start of time-step $n+k$ we are in emergency mode, the following steps happen:

\begin{enumerate}

\item The encoder uses the special codeword to indicate that we are in emergency mode.

\item The controller sets $U_{n+k}=0$. Since the controller does not have a good estimate of the state, it does nothing.

\item $A_{n+k},W_{n+k}$ are generated, and $X_{n+k+1}$ is determined. If $X_{n+k+1} < P M_{n+k}$, we exit emergency mode and the round ends. Else, we continue to the next time-step in emergency mode.


\end{enumerate}
\begin{rem}
Note that the state always belongs to the interval indicated by~\eqref{eq:interval} at the end of a round.
%
We also see that $M_n,I_n\geq M_0$ at all times.
\end{rem}

\section{Preparations for Analysis}

\subsection{A Modified Sequence}


The idea is to show that $|X_n|^2$ decreases on average when it is large. However this is not true during emergency mode, so we will define a dominating sequence $(N_{n})$ which does have the desired property. The analysis will also use a trick of modifying the sequence $(X_n)$ slightly. We will freeze the value of $X_n$ at a time of interest $n_0$, by setting the subsequent $A_n$'s to $1$ and $W_n$'s to $0$. This modification is an important step required to obtain the dominating sequence though Proposition~\ref{prop:trivial} as explained in Remark~\ref{rem:prop}. If we continued to let the state $X_n$ evolve, then we would have to account for the probability that $X_{n_0}$ was large but $A_{n_0}$ was small.

First, we make a few important definitions. Fix $n_0$, where $\mathbb E[X_{n_0}^2]$ is what we would like to bound. We define a modified sequence as follows: 
\begin{equation}
 \tilde X_n=\begin{cases} X_n &\mbox{if } n<n_0, \\
 X_{n_0} & \mbox{if } n \geq n_0. \end{cases}
\end{equation}
which satisfies the recurrence 
\begin{equation}
 \tilde X_{n+1}=\tilde A_n\tilde X_n+\tilde W_n-\tilde U_n
\end{equation}
for 
\begin{equation}
(\tilde A_n,\tilde W_n,\tilde U_n)=\begin{cases} (A_n,W_n,U_n) &\mbox{if } n<n_0 \\
(1,0,0) & \mbox{if } n \geq n_0 \end{cases}
\end{equation} 
Define a sequence $\tilde M_n$:
%
\begin{equation}
 \tilde M_{n}=\begin{cases} M_n &\mbox{if } n\leq n_0 \\ P\tilde M_{n-1} &\mbox{if } n>n_0 \text{ and } |\tilde X_{n_0}|>\tilde M_{n-1}\\ \tilde M_{n-1}&\mbox{if } n>n_0 \text{ and } |\tilde X_{n_0}|\leq \tilde M_{n-1}.\end{cases}
\end{equation}
Note that $\tilde M_n$ might still be growing for $n > n_0$, even though the sequence $\tilde X_n$ it aims to bound is not growing for $n > n_0$.
Finally define:
\begin{equation}
 \tilde I_{n}=\begin{cases} I_n &\mbox{if } n\leq n_0\\  I_{n_0}&\mbox{if } n> n_0.\end{cases}
\end{equation}

Now we define a counter for the length of a round for the modified sequence $\tilde{X}_{n}$. 

\begin{defn}

Define $\tau(n)$ by 
\begin{equation}
 \tau(n)=\min\left\{m\geq n:|\tilde X_m|\leq P \tilde M_{m-1}\right\}.
\end{equation}
Assuming time-step $n<n_0$ does not involve an emergency which lasts until time $n_0$, the value $\tau(n)$ is simply the first time-step $n$ or larger which starts in normal mode. Hence for a time-step $n$ which exits in normal mode, $\tau(n)=n$. For a time-step $n$ which starts in emergency mode, suppose that time-steps $n,...,n+k<n_0$ exit in emergency mode and $n+k+1$ does not. Then 
\begin{equation}
 \tau(n)=\tau(n+1)=...=\tau(n+k)=n+k+1.
\end{equation}

\end{defn}
Note that we always have $\tau(\tau(n))=\tau(n)\geq n$, usually with equality. 


\begin{defn}
Let $K$ be a large constant.
Define 
\begin{equation} \label{eq:Q}
 Q_n=\sqrt{\tilde M_n^2+K\tilde I_n^2}.
\end{equation}

\end{defn}

The strange-seeming $Q_n$ is essentially $M_n$ but with better expected-decrease guarantees that will be relevant later.

Finally, we define the dominating sequence which we will show is decreasing on average for large~$n$. With this definition of $N_{n}$ we ``front-load'' the cost of an emergency mode. Thus, $N_{n}$ is decaying exponentially during the emergency mode, even though $X_{n}$ is not, which helps us obtain the required bound.
\begin{defn}

For a time-step $n$, let 
\begin{equation}
 N_n=Q_{\tau(n)}2^{\tau(n)-n}. \label{eq:Nn}
\end{equation}

\end{defn}
This sequence essentially measures the time until the end of the round from any $n$ through the exponent $\tau(n)-n$. The idea is to show that $N_n$ decreases on average when it is large, and to conclude from this that $\mathbb E[N_n^2]$ is uniformly bounded. We show in Proposition~\ref{prop:trivial} below that $N_{n_0}$ bounds $X_{n_0}$, allowing us to conclude that $\mathbb E[X_{n_0}^2]$ is bounded, for all $n_0$.

For technical reasons explained just below, we will work with the sequence $\tilde X_n$ when estimating $\mathbb E[X_{n_0}^2]$. We think of this modified sequence as evolving normally until time $n_0$. At time $n_0$, we freeze the value of $X_n$ to be $X_{n_0}$. We finish the current round if we are in the middle of emergency mode, and then stop permanently. 

\begin{prop}

\label{prop:trivial}

We always have 
\begin{equation}
 |X_{n_0}|\leq N_{n_0}.
\end{equation}

\end{prop}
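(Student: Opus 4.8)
The plan is to prove $|X_{n_0}| \leq N_{n_0}$ by considering the two cases of what happens at time-step $n_0$, since $N_{n_0} = Q_{\tau(n_0)} 2^{\tau(n_0)-n_0}$ and the structure of $\tau$ depends on whether $n_0$ starts in normal or emergency mode. Recall $Q_{\tau(n_0)} = \sqrt{\tilde M_{\tau(n_0)}^2 + K \tilde I_{\tau(n_0)}^2} \geq \tilde M_{\tau(n_0)} \geq 0$, so it suffices to show $|X_{n_0}| = |\tilde X_{n_0}| \leq \tilde M_{\tau(n_0)} 2^{\tau(n_0)-n_0}$.

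First I would handle the case where time-step $n_0$ starts in normal mode, i.e., $|\tilde X_{n_0}| \leq P\tilde M_{n_0 - 1}$. Then $\tau(n_0) = n_0$, so we need $|X_{n_0}| \leq \tilde M_{n_0} = M_{n_0}$. By the Remark, the state always belongs to the interval $\rho_{n_0}[M_{n_0} - 2I_{n_0}, M_{n_0}]$ at the end of the previous round (equivalently, when normal mode is entered at $n_0$, the quantized interval containing $X_{n_0}$ has endpoints $a_{n_0}, b_{n_0}$ with $M_{n_0} = \max(M_0, |a_{n_0}|, |b_{n_0}|)$), which immediately gives $|X_{n_0}| \leq M_{n_0}$. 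Here I would point to \eqref{eq:interval} and the definition of $M_n$ in Case~1.

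Next I would handle the case where time-step $n_0$ starts in emergency mode. Say the emergency lasts through steps $n_0, n_0+1, \ldots$; by the definition of the modified sequence, once we pass $n_0$ the value freezes at $X_{n_0}$ and $\tilde M$ keeps multiplying by $P$ until it exceeds $|\tilde X_{n_0}| = |X_{n_0}|$. Let $j$ be the number of additional doublings/multiplications; then $\tau(n_0) = n_0 + j$ (roughly), and $\tilde M_{\tau(n_0)} = P^{j} \tilde M_{n_0 - 1}$ or similar, chosen precisely so that $|X_{n_0}| \leq \tilde M_{\tau(n_0)}$ by the stopping rule defining $\tilde M_n$ for $n > n_0$ (the case $|\tilde X_{n_0}| > \tilde M_{n-1}$ forces another multiplication, the case $|\tilde X_{n_0}| \leq \tilde M_{n-1}$ stops it). Since $2^{\tau(n_0)-n_0} \geq 1$, this gives $|X_{n_0}| \leq \tilde M_{\tau(n_0)} \leq \tilde M_{\tau(n_0)} 2^{\tau(n_0)-n_0} \leq Q_{\tau(n_0)} 2^{\tau(n_0)-n_0} = N_{n_0}$. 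I would also note the edge case where the original (unmodified) emergency beginning at $n_0$ already resolves before $n_0$ — but since we're working with $\tilde X$ which freezes at $n_0$, the relevant quantity is governed entirely by the $\tilde M_n$ recursion for $n > n_0$, so this is subsumed.

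The main obstacle I expect is bookkeeping around the definition of $\tilde M_n$ and $\tau(n)$ when $n_0$ itself falls in the middle of an emergency that began earlier: one has to be careful that $\tilde M_{n_0} = M_{n_0}$ agrees with the ongoing emergency's $M$-values, and that the freezing of $\tilde X$ at $n_0$ meshes correctly with the "finish the current round then stop" prescription described before the proposition. Concretely, the subtlety is that $\tilde M_n$ for $n > n_0$ is defined by comparing $|\tilde X_{n_0}|$ to $\tilde M_{n-1}$ rather than by the original emergency dynamics (which compare $|X_{n+k+1}|$ to $PM_{n+k}$), so I need to verify these two descriptions of the round-ending time are consistent, or rather that the $\tilde{}$ version is what $N_{n_0}$ is built on and it self-consistently guarantees $|X_{n_0}| \leq \tilde M_{\tau(n_0)}$. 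Once that alignment is pinned down, the inequality $|X_{n_0}| \leq N_{n_0}$ is essentially immediate from $N_{n_0} \geq \tilde M_{\tau(n_0)} \geq |X_{n_0}|$.
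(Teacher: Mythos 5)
Your approach is essentially the same as the paper's: both reduce to showing $|\tilde X_{n_0}| = |\tilde X_{\tau(n_0)}| \leq \tilde M_{\tau(n_0)} \leq Q_{\tau(n_0)} \leq N_{n_0}$, and the key step is that $\tilde M_{\tau(n_0)}$ dominates the (frozen) state at the time the round closes. The concern you flag about the factor-of-$P$ mismatch between the threshold in $\tau$ (namely $|\tilde X_m|\leq P\tilde M_{m-1}$) and the threshold in the $\tilde M$-recursion (namely $|\tilde X_{n_0}|\leq \tilde M_{n-1}$) does resolve cleanly: at $m^*=\tau(n_0)>n_0$ one has $|\tilde X_{n_0}|\leq P\tilde M_{m^*-1}$, and the recursion then yields $\tilde M_{m^*}=P\tilde M_{m^*-1}\geq|\tilde X_{n_0}|$ if $|\tilde X_{n_0}|>\tilde M_{m^*-1}$ and $\tilde M_{m^*}=\tilde M_{m^*-1}\geq|\tilde X_{n_0}|$ otherwise, so either way $\tilde M_{\tau(n_0)}\geq|\tilde X_{n_0}|$; the $m^*=n_0$ case is the normal-mode quantization bound $|X_{n_0}|\leq M_{n_0}$. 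So the ``roughly''/``or similar'' hedges can be removed and your proof is complete.
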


\begin{proof}

Since $X_{n_0} = \tilde X_{n_0}$ and the value of $\tilde X_n$ is frozen to be $\tilde X_{n_0}$ for all  $n \geq n_0$ and $\tau(n_0)\geq n_0$, we have that
\begin{equation}
 |\tilde X_{n_0}|=|\tilde X_{\tau(n_0)}|\leq M_{\tau(n_0)}\leq Q_{\tau(n_0)}\leq N_{n_0}.
\end{equation}

\end{proof}

\begin{rem}\label{rem:prop}

If we did not switch to working with the sequence $(\tilde X_n)$, then the inequalities above would hold with $X$ in place of $\tilde X$ but the first equality would not hold. The trick of freezing the value of $\tilde{X}_{n}$ at $X_{n_{0}}$ for $n\geq n_{0}$ allows us to focus exactly on bounding the quantity of interest, which is $\bigE[X_{n_{0}}^{2}]$, without worrying about fluctuations in the sequence due to $A_{n}$'s for $n \geq n_{0}$.


\end{rem}

\section{Analysis of the Algorithm}

Now we state our main theorem showing that $N_n$ decreases on average when large. We need to specify a filtration; we will include in the $\sigma$-algebra $\mathcal F_n$ all steps of the algorithm which have happened so far. 

\begin{defn}

Let $\mathcal F_n$ be the $\sigma$-algebra generated by $(A_k)_{k<n}$, $(W_k)_{k<n}$ and all bits sent by the encoder in time-steps up through $n-1$. 
\label{eq:Fn}
\end{defn}

%
%

\begin{thm}

\label{thm:mainestimate}
Let $n_0$ be any fixed time. Then, for any $n<n_0$, positive constant $c<\min\left(1-\sigma_A^2,\frac{3}{4}\right)$ that does not depend on $n_{0}$, sufficiently large $P,M_0,K$, and sufficiently small $\delta=\delta(c,P,M_0,K)$, there exists a constant $C(c, P, M_0, K, \sigma_W,\sigma_A)$ such that 
\begin{equation}
\mathbb E[N_{n}^2]\leq C
\end{equation}
%

\end{thm}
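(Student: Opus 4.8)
The plan is to establish a supermartingale-type recursion for a power of $N_n$ that forces $\E[N_n^2]$ to remain bounded uniformly over $n<n_0$. The natural quantity to track is not $N_n^2$ directly but something like $\E[N_{n-1}^2 \mid \mathcal F_{n-1}]$ expressed in terms of $N_n$, exploiting the drift structure. Since $N_n = Q_{\tau(n)} 2^{\tau(n)-n}$, I would first handle the two regimes separately. \emph{Normal-mode step:} if time-step $n$ exits in normal mode, then $\tau(n)=n$ (assuming no emergency straddling $n_0$), so $N_n = Q_n$, and I want to show $\E[N_{n+1}^2 \mid \mathcal F_n] \le (1-c)N_n^2 + C_0$ whenever $N_n$ is large. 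Using \eqref{eq:basic}-style conditional variance computations together with the control law \eqref{eq:u} and the bound \eqref{eq:I}, one gets $\E[\tilde X_{n+1}^2 \mid \mathcal F_n] \le \sigma_A^2 \tilde X_n^2 + \mu_A^2 I_n^2 + \sigma_W^2$, and since $\tilde X_n$ lies in the interval \eqref{eq:interval} one has $\tilde X_n^2 \le M_n^2$. The key point is that $Q_n^2 = M_n^2 + K I_n^2$ was designed so that the weighted combination of the $M$-decrease and the (possibly non-decreasing) $I$-term still beats a factor $1-c$: one needs $K$ large enough that $I_n^2$ contributes favorably, and $\delta$ small enough that, after one normal step, the new $Q_{n+1}^2$ (reflecting the smaller quantization interval and the shrunk estimate) is at most $(1-c)Q_n^2$ up to additive constants from $\sigma_W^2$ and $M_0$.

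\emph{Emergency-mode step:} if time-step $n$ begins an emergency that lasts $k$ further steps and resolves at $\tau(n)=n+k+1$, then $N_n = Q_{n+k+1} 2^{k+1}$. Here the point of the $2^{\tau(n)-n}$ weighting is that $N_n$ already front-loads the entire cost of the emergency, so for $m$ strictly inside the emergency we have $N_m = N_n 2^{n-m}$, i.e.\ $N$ decays geometrically by factor $\tfrac12$ each step during an emergency regardless of the behavior of $\tilde X$. The work is in controlling the \emph{entry} into emergency mode: conditioned on $\mathcal F_n$ with $N_n = Q_n$ large, the probability that $|\tilde X_{n+1}| > P\tilde M_n$ is small because it forces $|A_n|$ (roughly) $\gtrsim P$, and the finite $(4+\varepsilon)$-th moment of $A$ gives a tail bound $\Pr[|A_n|>t] \le c_A t^{-(4+\varepsilon)}$; moreover on that event the overshoot $\tilde X_{n+1}$, and hence the eventual $Q_{\tau(n+1)} 2^{\tau(n+1)-(n+1)}$, is controlled in $L^2$ by combining the tail of $A$ with the geometric length of the resulting emergency (each additional emergency step multiplies $\tilde M$ by $P$ but multiplies the $2^{\cdot}$ weight in $N$ only by $\tfrac12$, so if $P$ is fixed we instead need the length distribution of the emergency to have enough exponential margin — this is where the $4$th-moment hypothesis and the choice of $P$ large interact). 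Concretely I would prove a one-step bound $\E[N_{n+1}^2 \mid \mathcal F_n] \le (1-c) N_n^2 + C_0$ valid in \emph{all} cases (normal exit, emergency continuation, emergency entry), absorbing the rare-but-large emergency contributions into $C_0$ using the moment tail.

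Given such a uniform recursion $\E[N_{n+1}^2\mid\mathcal F_n]\le(1-c)N_n^2+C_0$, the conclusion is routine: taking expectations and iterating from the initialization $N_1$ (which is a constant determined by $M_0$) yields $\E[N_n^2]\le (1-c)^{n-1}\E[N_1^2] + \tfrac{C_0}{c} \le C$ for all $n$, with $C = \E[N_1^2] + C_0/c$ depending only on $c,P,M_0,K,\sigma_W,\sigma_A$ and not on $n_0$. One technical wrinkle is that for $n$ close to $n_0$ the modification freezing $\tilde X$ at $X_{n_0}$ changes the dynamics ($\tilde A=1,\tilde W=0$), but there $\tilde X$ is constant and $\tilde M$ only grows when still chasing it, so the same emergency-style geometric decay of $N$ applies and in fact the recursion holds a fortiori; I would check this boundary case explicitly. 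The main obstacle, and the heart of the proof, is the emergency-entry estimate: showing that the $L^2$ mass injected by a large $A_n$ realization — after being spread across an emergency mode whose length is itself random and heavy-tailed — is summable and small, which forces the quantitative interplay between $P$ (large), $K$ (large), $\delta$ (small depending on the others), and the constant $c<\min(1-\sigma_A^2,3/4)$. The ceiling $3/4$ on $c$ presumably comes from the factor-$\tfrac12$-per-step emergency decay (so $N^2$ decays by $\tfrac14$, i.e.\ by factor $1-c$ with $c=\tfrac34$) needing to dominate the normal-mode requirement $c<1-\sigma_A^2$.
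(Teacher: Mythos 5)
Your proposal matches the paper's proof in structure and in every key idea: decompose the one-step expectation into normal-mode exit (a conditional contraction of $Q_n^2$ obtained from the controller bound $|\mu_A X_n - U_n|\le \mu_A I_n$ together with choosing $K$ large and $\delta$ small), emergency continuation (the automatic factor-$\tfrac12$ decay of $N$ per step, which is exactly where the $c<\tfrac34$ ceiling comes from), and emergency entry (controlled via $\alpha$-moment tails of $A$ and Markov, with $P$ chosen large so the resulting geometric series in the emergency length $k$ converges and is small); these are precisely the paper's Lemmas~\ref{lem:normalbound} and~\ref{lem:emergencybound} plus the surrounding induction. The only minor wrinkle is measurability: since $\tau(n)$ looks into the future, $N_n$ is $\mathcal F_{\tau(n)}$-measurable rather than $\mathcal F_n$-measurable, so the paper conditions on $\mathcal F_{\tau(n)}$; this is harmless here because you pass to unconditional expectations before iterating.
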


Our main result, Theorem~\ref{thm:main} follows immediately as a corollary.

%

\begin{proof}[Proof of Theorem~\ref{thm:main}]
        For any fixed $n_0$, modifying the sequence at $n_0$ results in a sequence $N_n$. We know from Prop~\ref{prop:trivial} that $|X_{n_0}| \leq N_{n_0}$. 
        %
        Theorem~\ref{thm:mainestimate} implies inductively that 
\begin{equation}
 \sup_{n}(\mathbb E[N_n^2]) = \sup_{n_0}\left(\sup_{n<n_0} \mathbb E[N_n^2]\right) <\infty,
\end{equation}
        %
        which gives the result.
\end{proof}

\subsection{Proof of Theorem~\ref{thm:mainestimate}}

For a time-step $n<n_{0}$ that is in normal mode, we have that $\tau(n)=n$. We write:
\begin{equation}
 N_{n+1}^2= N_{n+1}^2 \mathbf{1}_{\tau(n+1)=n+1}+N_{n+1}^2 \mathbf{1}_{\tau(n+1)>n+1}. \label{eq:split}
\end{equation}
We estimate the conditional expectation of each term separately. For the first term we establish a second-moment averaged decrease from $M_n$ to $M_{n+1}$ when we stay in normal mode, while for the second we show that the contributions from emergency mode are small. The first step is elementary but with slightly involved algebra, while the second requires us to control the tails for the length of a round (and thus requires a bounded $4+\epsilon$-th moment). We split this work into the following pair of lemmas.

\begin{lem}
\label{lem:normalbound}
Suppose that time-steps $n$ and $n+1$ are both $<n_0$ and start in normal mode, so that $\tau(n) =n$ and $\tau(n+1) =n+1$. 
Then for any positive constant $c<1-\sigma_A^2$, sufficiently large $K$, arbitrary $P,M_0$, and sufficiently small $\delta=\delta(c,P,M_0,K)$ we have
\begin{equation} \label{eq:lemma1}
 \mathbb E[N_{n+1}^2\mathbf{1}_{\tau(n+1)=n+1}\mid \mathcal{F}_n]\leq (1-c)N_n^2+ 2\sigma_W^2+K M_0^2~\text{a.s.}.
\end{equation}

\end{lem}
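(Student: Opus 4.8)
The plan is to work directly with the definitions, noting that since time-steps $n$ and $n+1$ both start in normal mode we have $\tau(n)=n$, $\tau(n+1)=n+1$, hence $N_n = Q_n$ and on the event $\{\tau(n+1)=n+1\}$ also $N_{n+1}=Q_{n+1}$. So \eqref{eq:lemma1} reduces to showing $\mathbb E[Q_{n+1}^2 \mathbf 1_{\tau(n+1)=n+1}\mid\mathcal F_n]\le (1-c)Q_n^2 + 2\sigma_W^2 + KM_0^2$ almost surely. Expanding $Q_{n+1}^2 = \tilde M_{n+1}^2 + K\tilde I_{n+1}^2$, and using that for $n+1<n_0$ the tilde-quantities agree with the untilde ones, I would bound the two pieces separately. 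The $K\tilde I_{n+1}^2$ term is controlled crudely: in normal mode at step $n+1$, the interval $[-PM_n,PM_n]$ is cut into $2/\delta$ pieces of length $\delta PM_n$, so $I_{n+1}\le \max(M_0, \tfrac12\delta P M_n)$, giving $KI_{n+1}^2 \le KM_0^2 + \tfrac14 K\delta^2 P^2 M_n^2$. This is where the ``sufficiently small $\delta$'' enters: $\delta$ will be chosen so that this $O(\delta^2)$ contribution, together with the quantization slack in the first term, is absorbed into the gap between $\sigma_A^2$ and $1-c$.

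For the main term $\tilde M_{n+1}^2 = M_{n+1}^2$, I would start from the exact second-moment identity \eqref{eq:basic} applied conditionally: $\mathbb E[X_{n+1}^2\mid\mathcal F_n] = \sigma_A^2 X_n^2 + (\mu_A X_n - U_n)^2 + \sigma_W^2$. By \eqref{eq:I}, $(\mu_A X_n - U_n)^2\le \mu_A^2 I_n^2$. Since $X_{n+1}\in [-PM_n,PM_n]$ on the relevant event and $M_{n+1}=\max(M_0,|a_{n+1}|,|b_{n+1}|)$ where $(a_{n+1},b_{n+1})$ is the quantization cell containing $X_{n+1}$, we have $M_{n+1}\le |X_{n+1}| + \delta P M_n$ (the cell has length $\delta P M_n$, so the far endpoint is within that distance of $X_{n+1}$), unless the max is pinned at $M_0$. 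Squaring, $M_{n+1}^2 \le X_{n+1}^2 + 2\delta P M_n|X_{n+1}| + \delta^2 P^2 M_n^2 + M_0^2$; taking conditional expectation and using Cauchy–Schwarz or AM–GM on the cross term (with a small multiplier $\eta$ so $2\delta P M_n|X_{n+1}|\le \eta X_{n+1}^2 + \tfrac{\delta^2 P^2}{\eta}M_n^2$), one gets
\begin{equation}
\mathbb E[M_{n+1}^2\mid\mathcal F_n] \le (1+\eta)\bigl(\sigma_A^2 X_n^2 + \mu_A^2 I_n^2 + \sigma_W^2\bigr) + O(\delta^2/\eta)\,M_n^2 + M_0^2.
\end{equation}
Since $|X_n|\le M_n$ (the remark guarantees $X_n$ lies in the interval \eqref{eq:interval} at the end of the previous round, which is contained in $[-M_n,M_n]$) and $I_n\le M_n$, this is $\le (1+\eta)\sigma_A^2 M_n^2 + (1+\eta)\mu_A^2 I_n^2 + O(\delta^2/\eta)M_n^2 + (1+\eta)\sigma_W^2 + M_0^2$.

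Now I would combine: $\mathbb E[Q_{n+1}^2\mathbf 1\mid\mathcal F_n] \le \bigl[(1+\eta)\sigma_A^2 + O(\delta^2/\eta) + \tfrac14 K\delta^2 P^2\bigr] M_n^2 + (1+\eta)\mu_A^2 I_n^2 + 2\sigma_W^2 + 2KM_0^2$ (folding constants; the precise constant next to $M_0^2$ I would not fuss over, rounding up to $KM_0^2$ by taking $K$ large, matching the statement). The target is $(1-c)Q_n^2 + \cdots = (1-c)M_n^2 + (1-c)KI_n^2 + \cdots$. Comparing $I_n^2$-coefficients: I need $(1+\eta)\mu_A^2 \le (1-c)K$, which holds for $K$ large — this is exactly why $Q_n$ carries the $K I_n^2$ term, so that the quantization-error contribution to $M_{n+1}$ is charged against the (large) $KI_n^2$ already present in $Q_n$. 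Comparing $M_n^2$-coefficients: I need $(1+\eta)\sigma_A^2 + O(\delta^2/\eta) + \tfrac14 K\delta^2 P^2 \le 1-c$; since $c<1-\sigma_A^2$, there is room — first pick $\eta$ small enough that $(1+\eta)\sigma_A^2 < 1-c$ with slack, then pick $\delta$ small enough (depending on $c,P,M_0,K,\eta$) that the two $\delta^2$ terms fit inside the remaining slack. This establishes \eqref{eq:lemma1}.

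The main obstacle is bookkeeping rather than conceptual: getting the interaction of the three error sources right — the $O(\delta^2)$ from quantizing $M_{n+1}$, the $K\delta^2 P^2$ from $I_{n+1}$, and the need for $(1+\eta)\mu_A^2\le(1-c)K$ — and verifying that the order of quantifiers in the lemma ($c$ first, then large $K$, then small $\delta$ depending on everything) is actually consistent with these constraints. One subtlety to handle carefully is the $\max(M_0,\cdot)$ truncations in the definitions of $M_{n+1}$ and $I_{n+1}$: on the event where a truncation is active the bounds only improve, contributing at most the stated additive $M_0^2$ and $KM_0^2$ constants, so the estimate goes through; I would dispatch that case with a one-line remark rather than tracking it through every inequality.
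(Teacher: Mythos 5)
Your proposal is correct and follows essentially the same route as the paper: reduce $N$ to $Q$, split $Q_{n+1}^2 = M_{n+1}^2 + K I_{n+1}^2$, use the exact conditional identity \eqref{eq:basic} together with $|\mu_A X_n - U_n|\le |\mu_A| I_n$ and $|X_n|\le M_n$ for the $M_{n+1}^2$ piece, bound $I_{n+1}$ crudely by $\max(M_0, O(\delta P M_n))$, and then tune $K$ large and $\delta$ small so that the $M_n^2$- and $I_n^2$-coefficients both fall below $1-c$. The one point where you deviate is the cross term produced by squaring $M_{n+1}\le |X_{n+1}|+\delta P M_n$: you absorb $2\delta P M_n|X_{n+1}|$ via Young's inequality at the cost of a $(1+\eta)$ prefactor on $\sigma_A^2 M_n^2$, whereas the paper instead bounds $\mathbb E[|X_{n+1}|\mid\mathcal F_n]\le(2\mu_A+\sigma_A)M_n+\sigma_W$ and then eliminates the resulting linear-in-$M_n$ term by $2P\delta\sigma_W M_n\le P^2\delta^2 M_n^2+\sigma_W^2$; both versions leave the same $O(\delta)$-sized slack that is then killed by shrinking $\delta$. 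The only real wrinkle is the additive constant: your final bound, like the paper's own computation, naturally produces $(1+K)M_0^2$ rather than the $KM_0^2$ in the lemma statement; the paper quietly carries this extra $M_0^2$ into the definition of $D$ in \eqref{eq:D}, so your ``I would not fuss over it'' remark is consistent with what the authors do, though the inequality $(1+K)M_0^2 \le KM_0^2$ you gesture at is not literally true and the cleaner fix is simply to state the lemma with $(1+K)M_0^2$.
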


\begin{lem}

\label{lem:emergencybound}
Suppose that $A$ and $W$ have finite $4 + \epsilon$-th moments. Further, suppose that time-step $n$ starts in normal mode, so that $\tau(n) = n$. However, \mbox{$X_{n+1} > P M_n$}, so that $\tau(n+1)> n+1$. Then, 
\begin{equation}
 \mathbb E[N_{n+1}^2\mathbf{1}_{\tau(n+1)>n+1} \mid \mathcal{F}_n]\leq \varepsilon N_n^2~\text{a.s.},
\end{equation}
where $\varepsilon(P,M_0)$ may be made made arbitrarily small by choosing $M_0$ sufficiently large and then $P$ sufficiently large.

\end{lem}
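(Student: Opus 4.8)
\textbf{Setup and reduction.} The plan is to bound $\mathbb E[N_{n+1}^2\mathbf 1_{\tau(n+1)>n+1}\mid\mathcal F_n]$ by decomposing over the (random) length $L$ of the emergency round that starts at time $n+1$. Conditioned on $\mathcal F_n$ and on the event $\{X_{n+1}>PM_n\}$, we have $\tau(n+1)=n+1+L$ where $L\ge 1$ is the number of further time-steps the system spends in emergency mode, and on this event $N_{n+1}=Q_{\tau(n+1)}2^{\tau(n+1)-(n+1)}=Q_{n+1+L}\,2^{L}$. During emergency mode the controller sets $\tilde U=0$, the $\tilde M$'s grow by a factor $P$ each step, and $\tilde I$ is frozen at $I_{n_0}$ (or $I_{n+1}$-type value), so $Q_{n+1+L}^2 = \tilde M_{n+1+L}^2+K\tilde I^2 \le P^{2L}\tilde M_{n+1}^2 + K\tilde I^2$, and $\tilde M_{n+1}\le C' N_n$ by construction (the emergency was triggered from normal mode at step $n$, so $M_n$, and hence the post-escape bound on $\tilde M_{n+1}=PM_n$, is controlled by $N_n$; similarly $\tilde I\le I_n\le M_n\lesssim N_n$). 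Thus schematically
\begin{equation}
\mathbb E[N_{n+1}^2\mathbf 1_{\tau(n+1)>n+1}\mid\mathcal F_n]\;\lesssim\; N_n^2\,\mathbb E\!\left[(2P)^{2L}\,\mathbf 1_{X_{n+1}>PM_n}\mid\mathcal F_n\right]
\end{equation}
up to an additive $K I^2$ term which is itself $\lesssim N_n^2$ times the escape probability. So everything reduces to showing $\mathbb E[(2P)^{2L}\mathbf 1_{X_{n+1}>PM_n}\mid\mathcal F_n]$ can be made arbitrarily small.

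\textbf{Controlling the round length and the trigger probability.} First, the probability the emergency is triggered: entering emergency at step $n+1$ requires $|X_{n+1}|=|\tilde A_n\tilde X_n+\tilde W_n-\tilde U_n|>PM_n$, and since in normal mode $|\mu_A X_n - U_n|\le\mu_A I_n\le\mu_A M_n$ and $|X_n|\le M_n$, this forces roughly $|A_n-\mu_A|\gtrsim P - O(1)$ (or $|W_n|\gtrsim PM_n\ge PM_0$), so by Markov applied to the $(4+\epsilon)$-th moment, $\mathbb P[\tau(n+1)>n+1\mid\mathcal F_n]\lesssim (P-O(1))^{-(4+\epsilon)} + (PM_0)^{-(4+\epsilon)}$. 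Second, given we are in emergency mode at step $n+1+j$ with bound $\tilde M_{n+1+j}=P^{j+1}M_n$, we exit at the next step unless $|X_{n+2+j}|=|\tilde A_{n+1+j}\tilde X_{n+1+j}|>P\tilde M_{n+1+j}$; since $|\tilde X_{n+1+j}|\le$ (something of order $\tilde M_{n+1+j}$ — here we must use that once emergency is entered the state is comparable to $\tilde M$, or more carefully track $|X|$ versus $\tilde M$), this again needs $|A_{n+1+j}|\gtrsim P$, an event of probability $\lesssim P^{-(4+\epsilon)}$ independent of the past. Hence $L$ is stochastically dominated by a geometric-type variable with continuation probability $q\lesssim P^{-(4+\epsilon)}$, and more precisely $\mathbb P[L\ge \ell\mid \text{emergency triggered},\mathcal F_n]\le q^{\ell-1}$.

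\textbf{Combining.} Putting these together, $\mathbb E[(2P)^{2L}\mathbf 1_{\{L\ge1\}}\mid\mathcal F_n]\le \mathbb P[L\ge1\mid\mathcal F_n]\cdot\sum_{\ell\ge1}(2P)^{2\ell}q^{\ell-1}$; choosing $P$ large, $q\lesssim P^{-(4+\epsilon)}$ makes $(2P)^2 q\lesssim P^{-2-\epsilon}\to0$, so the geometric series converges and is $O(P^2)$, while the prefactor $\mathbb P[L\ge1\mid\mathcal F_n]\lesssim P^{-(4+\epsilon)}+ (PM_0)^{-(4+\epsilon)}$. The net bound is $O(P^{2-4-\epsilon})+O(P^{2}(PM_0)^{-(4+\epsilon)})=O(P^{-2-\epsilon})+O(M_0^{-(4+\epsilon)}P^{-2-\epsilon})$, which tends to $0$ as first $M_0\to\infty$ then $P\to\infty$. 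Folding in the $(2P)^{2L}\le(2P^2)^{L}\cdot 2^L$ bookkeeping and the additive $KI^2\lesssim N_n^2$ term (multiplied by the small escape probability), we get $\mathbb E[N_{n+1}^2\mathbf 1_{\tau(n+1)>n+1}\mid\mathcal F_n]\le\varepsilon(P,M_0)N_n^2$ with $\varepsilon(P,M_0)$ as claimed.

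\textbf{Main obstacle.} The delicate point is not the geometric tail bound per se but the bookkeeping that keeps $\tilde M_{n+1}$, $\tilde I$, and (crucially) $|\tilde X_{n+1+j}|$ all controlled by $N_n$ uniformly throughout the emergency round, so that the factor $2^{\tau(n+1)-(n+1)}=2^L$ in $N_{n+1}$ is genuinely compensated by the $P^{-(4+\epsilon)}$ tail at \emph{each} step — i.e.\ verifying that the exit condition at every emergency step really does demand an independent large deviation of $A$, rather than being satisfiable "for free" because $|\tilde X|$ has drifted much smaller than $\tilde M$. Handling the case where the emergency was already in progress before $n_0$ (so the freezing kicks in) and making the constant $C'$ in $\tilde M_{n+1}\le C'N_n$ explicit are the other technical chores. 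I expect one has to be slightly careful about whether to measure escape relative to $|X_{n+1+j}|$ or $\tilde M_{n+1+j}$, and to use $|X|\le \tilde M$ is \emph{not} maintained in emergency mode — rather one uses that at the moment of escape, $|X|$ was only a factor $P$ or so above the previous bound, so $|X_{n+1+j}|\lesssim P\tilde M_{n+j}=\tilde M_{n+1+j}$, which is exactly enough.
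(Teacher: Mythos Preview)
Your overall architecture matches the paper's: decompose over the emergency round length $k$, bound $N_{n+1}^2\lesssim (2P)^{2k}N_n^2$ on $\{\tau(n+1)=n+k+1\}$, show the round length has geometric-type tails driven by $P^{-(4+\epsilon)}$, and sum. That part is fine.

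The gap is in how you justify the tail bound. You argue that staying in emergency at step $n+1+j$ requires $|A_{n+1+j}|\gtrsim P$ \emph{independently at each step}, and you support this with the claim that ``at the moment of escape, $|X|$ was only a factor $P$ or so above the previous bound, so $|X_{n+1+j}|\lesssim P\tilde M_{n+j}=\tilde M_{n+1+j}$.'' This is false: when emergency is triggered we only know $|X_{n+1}|>PM_n$, with no upper bound. If $|A_n|$ happened to be of order $P^{10}$, then $|X_{n+1}|\sim P^{10}M_n$, and the emergency can persist for roughly ten further steps with each $|A_{n+j}|\approx 1$ --- no additional large deviations needed. So the step-by-step factorization $\mathbb P[L\ge\ell\mid L\ge 1]\le q^{\ell-1}$ with $q\lesssim P^{-(4+\epsilon)}$ is not obtainable from the per-step argument you give. (Your ``Main obstacle'' paragraph even has the concern backwards: the danger is $|\tilde X|\gg\tilde M$, not $|\tilde X|\ll\tilde M$.)

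The fix --- and this is exactly what the paper does --- is to abandon the step-by-step conditioning and instead bound the event $\{L\ge k\}\subset\{|X_{n+k}|>P^k M_n\}$ in one shot. Writing $X_{n+k}$ explicitly as $(\prod_{j=0}^{k-1}A_{n+j})X_n$ plus a telescoping sum of $W$-terms, one gets $\mathbb E[|X_{n+k}|^\alpha\mid\mathcal F_n]\le C_\alpha\, m_\alpha^k\, M_n^\alpha$ with $m_\alpha:=\max(2,\mathbb E[(|A|+|\mu_A|)^\alpha])$, using independence of the $A_{n+j}$'s and subadditivity of the $L_\alpha$-norm for the $W$-sum. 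Markov then yields $\mathbb P[L\ge k\mid\mathcal F_n]\le C_\alpha(m_\alpha/P^\alpha)^k$, which \emph{is} the geometric tail you want --- but now correctly justified, and with the product of $A$'s absorbing the overshoot rather than pretending it is bounded. Plugging this into your $\sum_k (2P)^{2k}\,\mathbb P[L\ge k]$ gives a series with ratio $\asymp P^{2-\alpha}$, small for $\alpha>4$ and $P$ large, and the proof closes exactly as you outlined.
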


We now briefly explain how to derive Theorem~\ref{thm:mainestimate} from Lemmas~\ref{lem:normalbound} and~\ref{lem:emergencybound}.
\begin{proof}[Proof of Theorem~\ref{thm:mainestimate}, using Lemmas~\ref{lem:normalbound} and~\ref{lem:emergencybound}] 

First, we consider the case when we are in normal mode and $\tau(n) = n$, and use the representation in~\eqref{eq:split}. For $c$ as in the theorem statement, we first pick $P,M_0$ large enough such that, in the language of Lemma~\ref{lem:emergencybound}, $\varepsilon(P,M_0)$ satisfies 
\begin{equation}
 c+\varepsilon<\min\left(1-\sigma_A^2,\frac{3}{4}\right).
\end{equation}  
        \noindent Lemma~\ref{lem:emergencybound} then gives: 
\begin{equation}
 \mathbb E[N_{n+1}^2\mathbf{1}_{\tau(n+1)>n+1} \mid \mathcal{F}_{\tau(n)}]\leq \varepsilon N_n^2. \label{eq:step1}
\end{equation}

        \noindent Then substituting $c+\varepsilon$ for $c$ in Lemma~\ref{lem:normalbound}, pick $K$ large and then $\delta$ small to make 
\begin{equation}
 \mathbb E[N_{n+1}^2 \mathbf{1}_{\tau(n+1)=n+1} \mid \mathcal{F}_{\tau(n)}]\leq (1-c-\varepsilon)N_n^2+2\sigma_W^2+K M_0^2 \label{eq:step2}
\end{equation}
        \noindent hold whenever $\tau(n)=n$. Combining~\eqref{eq:step1} and~\eqref{eq:step2} using~\eqref{eq:split}, we have 
        \begin{equation}
        \mathbb E[N_{n+1}^2 \mid \mathcal{F}_{\tau(n)}]\leq (1-c)N_n^2+ 2\sigma_W^2+K M_0^2. \label{eq:indstep}
        \end{equation}

        \noindent in this case. 

Now consider the case that we are in emergency mode and $\tau(n)>n$. Here, due to \eqref{eq:Nn} we automatically have $N_{n+1}=\frac{N_n}{2}$. Since $c<\frac{3}{4}$ and $\mathcal{F}_{\tau(n)}$ includes $N_n$, it follows that: 
\begin{equation}
  \mathbb E[N_{n+1}^2 \mid \mathcal{F}_{\tau(n)}]= \frac{1}{4} N_n^2 \leq (1-c) N_n^2.
\end{equation}
        Thus~\eqref{eq:indstep} also holds when $\tau(n)>n$. 

\noindent Let 
\begin{equation}
D = 2\sigma_W^2 + (1+K) M_0^2. \label{eq:D}
\end{equation}
and observe that \eqref{eq:indstep} implies
\begin{equation}
 \mathbb E[N_{n+1}^2 ]\leq (1-c) \mathbb E[N_n^2]+ 2\sigma_W^2+K M_0^2. \label{eq:indstepE}
\end{equation}

\noindent Now we can proceed using induction. Assume $\mathbb E[N_{n}^2] \leq \frac{D}{c} = C$. Then, we have from \eqref{eq:indstepE} that
\begin{align}
        \mathbb E[N_{n+1}^2]
        &\leq (1-c) \frac{D}{c} + D - M_0^2 \\
        &< \frac{D}{c}.
\end{align} 
        It remains to verify the base case. Since \mbox{$X_0 = 0$}, we know that $M_0 = I_0$, the initial \mbox{$\tau(0) = 0$} and \mbox{$Q_0 = (1+K)M_0^2$}. Hence $N_{0} = Q_{0} \cdot 2^{0} = Q_{0}$
and \mbox{$\mathbb E[N_{0}^2] \leq D \leq \frac{D}{c},$ since $c<1$.}

        This proves Theorem~\ref{thm:mainestimate} assuming the lemmas.
\end{proof}


\subsection{Proofs of Lemmas}
Finally, we prove the two key lemmas below.

\begin{proof}[Proof of Lemma~\ref{lem:normalbound}]

In the case considered here, we have $\tau(n) = n$, and this implies $\mathcal{F}_{\tau(n)} = \mathcal{F}_{n}$ and $N_{n} = Q_{n}$.  Further, we have $\tau(n+1) = n+1$, and so $N_{n+1}=Q_{n+1}$. Hence, for the remainder of this proof we will refer to the $Q$'s instead of the $N$'s. Similarly, since $n < n_{0}$ we will refer to the $X$'s instead of the $\tilde X$'s, the $A$'s instead of $\tilde A$'s, etc; these modifications are only important for Lemma~\ref{lem:emergencybound} and not Lemma~\ref{lem:normalbound}.

Note that from the definition~\eqref{eq:Q} we have $Q_{n} = \sqrt{\tilde{M}_{n}^{2} + K \tilde I_{n}^{2}} = \sqrt{\tilde{M}_{n}^{2} + K \tilde I_{n}^{2}}$. We will obtain the bound~\eqref{eq:lemma1} by bounding both $M_{n}$ and $I_{n}$. We first start with obtaining a bound on $M_{n}$

From~\eqref{eq:basic} we have
\begin{equation}
\mathbb E[X_{n+1}^2|\mathcal F_n]=\sigma_A^2 X_n^2+(\mu_AX_n-U_n)^2+\sigma_W^2.
\end{equation}
\noindent In addition, since we are in normal mode we have from~\eqref{eq:I}: that
\[|\mu_A X_n-U_n|\leq \mu_A I_n.\] 
\noindent This implies 
\begin{equation}\label{eq:firstbound} 
\mathbb E[X_{n+1}^2\mathbf{1}_{\tau(n+1)=n+1}|\mathcal F_n]\leq E[X_{n+1}^2|\mathcal F_n]\leq \sigma_A^2M_n^2+\mu_AI_n^2+\sigma_W^2.
\end{equation}
Furthermore, we assume this round lasts exactly one time-step and $\tau(n+1)=n+1$. Hence, we also have that either
\begin{equation}
|M_{n+1}-X_{n+1}|\leq P\delta M_n, \label{eq:abovelimit}
\end{equation}
\noindent or the state $X_{n+1}$ has gotten below the lower limit $M_0$ and we have that $M_{n+1} = M_0$:
\begin{equation}
|X_{n+1}|\leq M_{n+1}=M_0.\label{eq:belowlimit}
\end{equation}
From this we obtain the error estimate
\begin{equation}
\mathbb E[(M_{n+1}^2-X_{n+1}^2)\mathbf{1}_{\tau(n+1)=n+1}\mid \mathcal F_n] \leq P\delta M_n \left(2\mathbb E[|X_{n+1}|\mathbf{1}_{\tau(n+1)=n+1}\mid \mathcal F_n] + P\delta M_n\right)+M_0^2.\label{eq:secondbound}
\end{equation}
This follows because we have that $(a^2 - b^2) = (a+b) (a-b)$, and further we observe that $|X + M| \leq |2X| + |M-X|.$ The last term $M_0^2$ comes from the case where the state hits the lower limit, and follows from~\eqref{eq:belowlimit}.

Now using the definition of $X_{n+1}$ in~\eqref{eq:system}, the choice of $U_{n}$ in~\eqref{eq:u} and the simple estimate \mbox{$\mathbb E[|Z|]\leq \mu_Z+\sigma_Z$} for any real random variable $Z$, we obtain:
\begin{equation}
\mathbb E[|X_{n+1}|\mid \mathcal F_n] \leq \mathbb E[|A_n|]|X_n|+\mathbb E[|W_n|]+|\mu_A M_n| \leq (2\mu_A+\sigma_A)M_n+\sigma_W. \label{eq:thirdbound}
\end{equation}
Adding~\eqref{eq:firstbound} and~\eqref{eq:secondbound}, and then using the bound in~\eqref{eq:thirdbound} for the conditional expectation of $|X_{n+1}|$ we get that:
\begin{equation}
\mathbb E[M_{n+1}^2\mathbf{1}_{\tau(n+1)=n+1} \mid \mathcal F_n]\leq \left(\sigma_A^2+(2\mu_A+\sigma_A)(2P\delta)+P^2\delta^2\right)M_n^2 +2P\delta\sigma_W M_n+\mu_A I_n^2 +\sigma_W^2+ M_0^2. 
\end{equation}
Since $2P\delta\sigma_WM_n\leq P^2\delta^2M_n^2+\sigma_W^2$ we can eliminate the term that is linear in $M_n$, obtaining the slightly cleaner
\begin{equation}\label{eq:mbound}
\mathbb E[M_{n+1}^2\mathbf{1}_{\tau(n+1)=n+1}|\mathcal F_n]\leq \left(\sigma_A^2+(2\mu_A+\sigma_A)(2P\delta)+2P^2\delta^2\right)M_n^2 +\mu_A I_n^2 +2\sigma_W^2+M_0^2.
\end{equation}

Estimating $I_{n+1}^2$ is much simpler; since we know from~\eqref{eq:idef} that $|I_{n+1}|\leq \max(P\delta M_n,M_0)$ whenever $\tau(n)=n$ we obviously have
\begin{equation}\label{eq:ibound}
\mathbb E[I_{n+1}^2\mathbf{1}_{\tau(n)=n}|\mathcal F_n]\leq P^2\delta^2 M_n^2+M_0^2.
\end{equation}

Now, recall we defined $Q_n=\sqrt{\tilde M_n^2+K\tilde I_n^2}.$
Adding~\eqref{eq:mbound} and~\eqref{eq:ibound} we get:
\begin{equation} \label{eq:qbound}
\mathbb E[Q_{n+1}^2\mathbf{1}_{\tau(n+1)=n+1}|\mathcal F_n]\leq \left(\sigma_A^2+(2\mu_A+\sigma_A)(2P\delta)+(2+K)P^2\delta^2\right)M_n^2 +\mu_A I_n^2 +2\sigma_W^2+(K)M_0^2. 
\end{equation}
Now, we choose parameters $K, \delta$ so that we can bound the coefficients of $M_{n}^{2}$ and $I_{n}^{2}$ in~\eqref{eq:qbound}, i.e, so that the following two inequalities hold:
\begin{align}
&\sigma_A^2+(2\mu_A+\sigma_A)(2P\delta)+(2+K)P^2\delta^2 \leq 1-c,\textrm{ and}\\
&\mu_A \leq (1-c)K.
\end{align}
These can be satisfied for any positive $c<1-\sigma_A^2$ if we take $P$ arbitrary, choose $K\geq \frac{\mu_A}{1-c}$, and finally take $\delta=\delta(c,K,P)$ sufficiently small. This gives:
\begin{align*}
\mathbb E[Q_{n+1}^2\mathbf{1}_{\tau(n+1)=n+1}|\mathcal F_n]&\leq (1-c) M_n^2 +(1-c)K I_n^2 +2\sigma_W^2+(K)M_0^2\\
&= (1-c)Q_n^2 + 2\sigma_W^2+(K)M_0^2.
\end{align*}
\end{proof}

\begin{rem}

Making the choice $c<1-\sigma_A^2$ is technically the only place in the proof where we use the necessary assumption $\sigma_A^2<1$. The value for $P$ was essentially irrelevant in the above calculations but picking $P$ appropriately will be important for Lemma~\ref{lem:emergencybound}. In particular, it is important that an arbitrarily large value of $P$ is acceptable if we take $\delta$ very small to compensate. The value $M_0$ did not come into play but will play a minor role in proving Lemma~\ref{lem:emergencybound}. \\

For some intuition on the definition of $Q_n$, note that if $M_n=I_n$, i.e. when we hit the lower limit of $M_{0}$, then we cannot guarantee an averaged squared decrease of $M_n\to M_{n+1}$. However, in this case we expect $I_{n+1}= P\delta I_n$ to hold. In the other extreme case when we know $X_n=M_n$, we expect $M_n\to M_{n+1}$ to result in a decrease, and so picking $\delta$ small enough ensures that the regularization term $KI_n^2$ doesn't hurt us too much. Hence in both cases we expect a squared decrease from $Q_n\to Q_{n+1}$ when parameters are chosen as appropriately, e.g. as above.  

\end{rem}

\begin{proof}[Proof of Lemma~\ref{lem:emergencybound}]

Again, here note that $\tau(n) = n$, and we estimate $\mathbb E[N_{n+1}^2\mathbf{1}_{\tau(n+1)>n+1}]$ under this assumption. Note that if $\tau(n+1)=n+k+1$ (for $k\geq 1$) then we have  

\begin{align}
N_{n+1}^2 &\leq 2^{2k}\left(\tilde M_{n+k+1}^2+\tilde I_{n+k+1}^2\right)\nonumber \\
&\leq 2^{2k+1} \tilde M_{n+k+1}^2 = 2^{2k+1}P^{2k+2} \tilde M_n^2 \nonumber \\
&\leq 2^{2k+1} P^{2k+2} N_n^2. \label{eq:nn}
\end{align}
Hence qualitatively, it will suffice to show $\tau(n+1) - (n+1)$ has very fast decaying tails. This is what we will do.

We have $\tilde X_{n+1}=\tilde A_n\tilde X_n+\tilde W_n-\tilde U_n.$
For later emergency rounds, we have $\tilde U_{n+j}=0$ and so
\[\tilde X_{n+j+1}=\tilde A_{n+j}\tilde X_{n+j}+\tilde W_{n+j}.\]
Hence again taking $\tau(n+1)=n+k+1$, for each $k\geq h\geq 0$ we may write
\[\tilde X_{n+h+1}=(\tilde A_{n+1}\tilde A_{n+2}...\tilde A_{n+h})(\tilde A_n\tilde X_n-\tilde U_n) + \sum_{i=0}^h \left(\tilde W_i\prod_{j=i+1}^{h} \tilde A_{n+j}\right). \]
Since the above equation only holds when $\tau(n+1)\geq n+h+1$, for each $h\geq 0$ we define
\begin{equation}
Z_{n+h+1}:=(\tilde A_{n+1}\tilde A_{n+2}...\tilde A_{n+h})(\tilde A_n\tilde X_n-\tilde U_n) + \sum_{i=0}^h \left(\tilde W_i\prod_{j=i+1}^{h} \tilde A_{n+j}\right). 
\end{equation}
Since $|\tilde U_n|\leq |\mu_A|\tilde M_n$ we have 
\begin{equation}
|Z_{n+h+1}|\leq 2\tilde M_n(|\tilde A_n|+ |\mu_A|)(|\tilde A_{n+1}\tilde A_{n+2}...\tilde A_{n+h}|)+ \sum_{i=0}^h \left(|\tilde W_i|\prod_{j=i+1}^{h} |\tilde A_{n+j}|\right). \label{eq:ybound}
\end{equation}
To control the probability that $\tau(n+1)=n+k+1$, we will estimate the $\alpha$-moments of $Z_{n+h+1}$. The point is that $\tau(n+1)=n+k+1$ implies $|Z_{n+k}|\geq P^k\tilde M_n$ which is an abnormally large value. To do this we need to control the moments for each term. For convenience, define
\begin{equation} \label{eq:malpha}
m_{\alpha}=\max\left(2,\mathbb E\left[(|A|+|\mu_A|)^{\alpha}\right]\right),
\end{equation}
and 
\begin{equation}\label{eq:lalpha}
\ell_{\alpha}=\mathbb E[|W|^{\alpha}].
\end{equation}
Some of the $\tilde A_{n+j}$ terms may have the law of $A$, while others may be almost surely $1$. However, the value $m_{\alpha}$ will serve to estimate both cases uniformly; similarly $W$ might be identically $0$, and this will also be fine. We have the simple estimate for the first term in~\eqref{eq:ybound}, which follows from~\eqref{eq:malpha}, the independence of $\tilde{M}_n$ and all of the $\tilde{A}_k$'s, which are i.i.d..
\begin{equation} \label{eq:est1}
\mathbb E\left[\left|2\tilde M_n(|\tilde A_n|+|\mu_A|)(\tilde A_{n+1}\tilde A_{n+2}...\tilde A_{n+h})\right|^{\alpha} \bigr| ~\mathcal{F}_n\right]\leq \tilde M_n^{\alpha} 2^{\alpha}m_{\alpha}^{h+1}. 
\end{equation}
For the individual terms in the summation in~\eqref{eq:ybound}, we similarly have from~\eqref{eq:malpha},~\eqref{eq:lalpha}:
\[ \mathbb E\left[ \left(|\tilde W_i| \prod_{j=i+1}^h |\tilde A_{n+j}|\right)^{\alpha}   \biggr|~\mathcal{F}_n\right] \leq \ell_{\alpha}m_{\alpha}^{h-i}.\]
Let $B_i = \left(|\tilde W_i| \prod_{j=i+1}^h |\tilde A_{n+j}|\right)$. Then, the above inequality can be written as $||B_i||^{\alpha}_{\alpha} \leq \ell_{\alpha}m_{\alpha}^{h-i},$ where the $L_\alpha$-norm is taken with respect to $\mathbb{E}(\cdot \mid \mathcal{F}_n)$. 
The subadditivity of the $L_\alpha$-norm implies that
\[ ||\sum_i B_i||^{\alpha}_{\alpha} \leq \left(\sum_i||B_i||_{\alpha}\right)^{\alpha}. \]
This gives us the following bound for some constant $C(\alpha)$, where we also use $m_{\alpha}^{h-1} \leq m_{\alpha}^{h}$:
\begin{equation}\label{eq:est2}
\mathbb E\left[ \left(\sum_{i=0}^h|\tilde W_i| \prod_{j=i+1}^h |\tilde A_{n+j}|\right)^{\alpha}   \biggr| ~\mathcal F_n\right] \leq C(\alpha) \ell_{\alpha}m_{\alpha}^h.
\end{equation}
To combine the two estimates~\eqref{eq:est1} and~\eqref{eq:est2} we simply note that 
\[|x+y|^{\alpha}\leq 2^{\alpha}(|x|^{\alpha}+|y|^{\alpha})\]
for all reals $x,y$. Hence we obtain the following, where $C_{1}(\alpha)$ and $C_{2}(\alpha)$ are also constants:
\[\mathbb E\left[|Z_{n+h+1}|^{\alpha} \mid \mathcal{F}_n\right] \leq C_{1}(\alpha) (\tilde M_n^{\alpha}m_{\alpha}^{h+1}+ \ell_{\alpha}m_{\alpha}^h) \leq C_{2}(\alpha)m_{\alpha}^h(\tilde M_n^{\alpha}+1).\]
Because we have $|\tilde M_n|\geq M_0$ for all $n$, we may simply say for a constant $C_{3}(\alpha)$: 
\begin{equation}
\mathbb E\left[|Z_{n+h+1}|^{\alpha} \mid \mathcal{F}_n\right] \leq C_{3}(\alpha)m_{\alpha}^h\tilde M_n^{\alpha}. \label{eq:Ybound}
\end{equation}
Now in the event that $\tau(n)=n+k+1$ we must have 
\[|Z_{n+k}|\geq P^k\tilde M_n.\]
Therefore by the Markov inequality, 
\begin{align}
\mathbb P\left[|Z_{n+k}|\geq P^k\tilde M_n \mid \mathcal{F}_n\right] &\leq P^{-k\alpha}\tilde M_n^{-k\alpha} \mathbb E\left[|Z_{n+k}|^{\alpha}\mid \mathcal{F}_n\right]\nonumber\\
&\leq C_{3}(\alpha)P^{-k\alpha} \tilde M_n^{\alpha-k\alpha} m_{\alpha}^k. \label{eq:starstar}
\end{align}
Combining our work, we have
\begin{align}
\mathbb E[|N_{n+1}^2 {\tau(n+1)>n+1}] &\leq \sum_{k\geq 1} \left(2^{2k+1}P^{2k+2}N_n^2 ~\mathbb{P}\left[|Z_{n+k}|\geq P^k\tilde M_n\right] \right)\label{eq:end1}\\
&\leq  C_{4}(\alpha) N_n^2 \sum_{k\geq 1}  \left(2^{2k-2}P^{2k+2-k\alpha}\tilde M_n^{\tau(1-k)}m_{\alpha}^k  \right), \label{eq:end2}
\end{align}
for a fourth constant $C_{4}(\alpha)$.
The first inequality~\eqref{eq:end1} follows from~\eqref{eq:starstar}, and the second~\eqref{eq:end2} follows from~\eqref{eq:nn}.
This sum is a geometric series with first term $P^{4-\alpha}m_{\alpha}$, and ratio, $4P^{2-\alpha}\tilde M_n^{-\alpha}m_{\alpha}.$
Hence we obtain
\[\mathbb E[N_{n+1}^2 \mathbf{1}_{\tau(n+1)>n+1}] \leq C^{(4)}_{\alpha}N_n^2 \left(\frac{P^{4-\alpha}m_{\alpha}}{1-4P^{2-\alpha}\tilde M_n^{-\alpha}m_{\alpha}}\right).\] 
Since everything so far has been uniform in $P$, we see that for any $\alpha>4$, taking $P$ sufficiently large gives the upper bound 
\[\mathbb E[N_{n+1}^2 \mathbf{1}_{\tau(n+1)>n+1}] \leq \varepsilon N_n^2\]
as desired. (Recall $A,W$ have bounded $\alpha$-moment for some $\alpha>4$.) 

\end{proof}

\section{Conclusion and Future work}
Our paper considered the problem of stabilizing a system that was growing unpredictably using observations over a rate-limited channel. We provide a time-varying strategy that is able to stabilize the system. The  controller takes different actions based on whether the value of the system state is in a predicted interval or not --- we believe such a time-varying strategy is essential for this problem. In future work, we aim to close the gap between this strategy and the converse bound in~\cite{kostina_rate-limited_2016}. Extensions to the vector case are also interesting, since the differential growth rate along different directions must be taken into account.
\section*{Acknowledgments}

We thank Mikl\'os R\'acz and Serdar Y\"uksel for interesting discussions regarding this problem. We also thank the ISIT reviewers for their helpful comments.

%
%
%
%
\bibliographystyle{IEEEtran}
\bibliography{variablemultnoise}

\begin{thebibliography}{10}
\providecommand{\url}[1]{#1}
\csname url@samestyle\endcsname
\providecommand{\newblock}{\relax}
\providecommand{\bibinfo}[2]{#2}
\providecommand{\BIBentrySTDinterwordspacing}{\spaceskip=0pt\relax}
\providecommand{\BIBentryALTinterwordstretchfactor}{4}
\providecommand{\BIBentryALTinterwordspacing}{\spaceskip=\fontdimen2\font plus
\BIBentryALTinterwordstretchfactor\fontdimen3\font minus
  \fontdimen4\font\relax}
\providecommand{\BIBforeignlanguage}[2]{{%
\expandafter\ifx\csname l@#1\endcsname\relax
\typeout{** WARNING: IEEEtran.bst: No hyphenation pattern has been}%
\typeout{** loaded for the language `#1'. Using the pattern for}%
\typeout{** the default language instead.}%
\else
\language=\csname l@#1\endcsname
\fi
#2}}
\providecommand{\BIBdecl}{\relax}
\BIBdecl

\bibitem{wong1997systems}
W.~S. Wong and R.~W. Brockett, ``{Systems with Finite Communication Bandwidth
  Constraints---Part I: State Estimation Problems},'' \emph{IEEE Transactions
  on Automatic Control}, vol.~42, no.~9, pp. 1294--1299, 1997.

\bibitem{tatikonda}
S.~Tatikonda and S.~Mitter, ``{Control under communication constraints},''
  \emph{IEEE Transactions on Automatic Control}, vol.~49, no.~7, pp.
  1056--1068, 2004.

\bibitem{nair2000stabilization}
G.~N. Nair and R.~J. Evans, ``{Stabilization with data-rate-limited feedback:
  tightest attainable bounds},'' \emph{Systems \& Control Letters.}, vol.~41,
  no.~1, pp. 49--56, 2000.

\bibitem{nair2007feedback}
G.~N. Nair, F.~Fagnani, S.~Zampieri, and R.~J. Evans, ``{Feedback control under
  data rate constraints: An overview},'' \emph{Proceedings of the IEEE},
  vol.~95, no.~1, pp. 108--137, 2007.

\bibitem{kostina_rate-limited_2016}
V.~Kostina, Y.~Peres, M.~R{\'a}cz, and G.~Ranade, ``Rate-limited control of
  systems with uncertain gain,'' \emph{Allerton Conference on Communication,
  Control, and Computing}, 2016.

\bibitem{uncertaintyThreshold}
M.~Athans, R.~Ku, and S.~Gershwin, ``The uncertainty threshold principle: Some
  fundamental limitations of optimal decision making under dynamic
  uncertainty,'' \emph{IEEE Transactions on Automatic Control}, vol.~22, no.~3,
  pp. 491--495, 1977.

\bibitem{martinsUncertain}
N.~Martins, M.~Dahleh, and N.~Elia, ``{Feedback Stabilization of Uncertain
  Systems in the Presence of a Direct Link},'' \emph{IEEE Trans. Autom.
  Control}, vol.~51, no.~3, pp. 438--447, 2006.

\bibitem{okano2014arxiv}
K.~Okano and H.~Ishii, ``Minimum data rate for stabilization of linear systems
  with parametric uncertainties,'' 2014, arXiv preprint.

\bibitem{victoriacontrol}
V.~Kostina and B.~Hassibi, ``Rate-cost tradeoffs in control,'' 2016, preprint.

\bibitem{phat2004robust}
V.~N. Phat, J.~Jiang, A.~V. Savkin, and I.~R. Petersen, ``Robust stabilization
  of linear uncertain discrete-time systems via a limited capacity
  communication channel,'' \emph{Systems \& Control Letters}, vol.~53, no.~5,
  pp. 347--360, 2004.

\bibitem{okano2012data}
K.~Okano and H.~Ishii, ``Data rate limitations for stabilization of uncertain
  systems,'' in \emph{51st Conference on Decision and Control (CDC)}.\hskip 1em
  plus 0.5em minus 0.4em\relax IEEE, 2012, pp. 3286--3291.

\bibitem{okano2012datalossy}
------, ``{Data Rate Limitations for Stabilization of Uncertain Systems over
  Lossy Channels},'' in \emph{American Control Conference (ACC)}.\hskip 1em
  plus 0.5em minus 0.4em\relax IEEE, 2012, pp. 1260--1265.

\bibitem{controlcapacity}
G.~Ranade and A.~Sahai, ``{Control Capacity},'' in \emph{International
  Symposium on Information Theory (ISIT)}.\hskip 1em plus 0.5em minus
  0.4em\relax IEEE, 2015.

\bibitem{gireejaAllerton}
------, ``Non-coherence in estimation and control,'' in \emph{51st Annual
  Allerton Conf. on Comm., Control, and Comp.}, 2013.

\bibitem{tiger}
J.~Ding, Y.~Peres, and G.~Ranade, ``A tiger by the tail: when multiplicative
  noise stymies control,'' in \emph{International Symposium on Information
  Theory (ISIT)}.\hskip 1em plus 0.5em minus 0.4em\relax IEEE, 2016.

\bibitem{fu2010quantized}
M.~Fu and L.~Xie, ``Quantized feedback control for linear uncertain systems,''
  \emph{International Journal of Robust and Nonlinear Control}, vol.~20, no.~8,
  pp. 843--857, 2010.

\bibitem{zhou1998essentials}
K.~Zhou and J.~C. Doyle, \emph{Essentials of robust control}, 1998.

\bibitem{nairStabilization}
G.~Nair and R.~Evans, ``{Stabilizability of Stochastic Linear Systems with
  Finite Feedback Data Rates},'' \emph{SIAM Journal on Control and
  Optimization.}, vol.~43, no.~2, pp. 413--436, 2004.

\bibitem{brockett2000quantized}
R.~W. Brockett and D.~Liberzon, ``Quantized feedback stabilization of linear
  systems,'' \emph{IEEE transactions on Automatic Control}, vol.~45, no.~7, pp.
  1279--1289, 2000.

\bibitem{yuksel2010fixedrate}
S.~Y\"uksel, ``Stochastic stabilization of noisy linear systems with fixed-rate
  limited feedback,'' \emph{IEEE Transactions on Automatic Control}, vol.~55,
  no.~12, pp. 2847--2853, 2010.

\bibitem{kostina2018exact}
V.~Kostina, Y.~Peres, G.~Ranade, and M.~Sellke, ``Exact minimum number of bits
  to stabilize a linear system,'' 2018.

\end{thebibliography}
\end{document}